\def\RR{\mathbb{R}}
\def\CC{\mathbb{C}}
\theoremstyle{plain}
\newtheorem{lemma}{Lemma}
\theoremstyle{definition}
\theoremstyle{remark}
\DeclareTextAccent{\ring}{OT1}{23}
\def\x{\vect{x}}
\def\u{\vect{u}}
\def\y{\vect{y}}
\def\z{\vect{z}}
\def\b{\vect{b}}
\def\c{\vect{c}}
\def\snr{\mathrm{SNR}}
\newcommand{\Id}{\mathrm{Id}}
\newcommand{\syn}{\transp{\ana}}
\newcommand{\ana}{A}
\newcommand{\RA}{\mathcal{R}(\ana)}
\newcommand{\sig}{\y}
\newcommand{\mask}{M}
\newcommand{\sparse}{\mathcal{S}}
\newcommand{\norm}[1]{\|#1\|}
\newcommand{\transp}[1]{#1^\top}
\newcommand{\adjoint}[1]{#1^*}
\newcommand{\inv}[1]{#1^{-1}}
\newcommand{\pinv}[1]{#1^{\dagger}}
\newcommand{\vect}[1]{\mathbf{#1}} 
\newcommand{\matr}[1]{\mathbf{#1}} 
\newcommand{\argmin}{\mathop{\operatorname{arg~min}}}
\newcommand{\prox}{\mathrm{prox}}
\newcommand{\badprox}{\mathrm{approx}}
\newcommand{\proj}{\mathrm{proj}}
\newcommand{\rank}[1]{\mathrm{rank}(#1)}
\newcommand{\conv}{\mathop{\square}}
\newcommand{\edit}[1]{\textcolor{black}{#1}}
\newcommand{\irab}{\iota_{(\mathcal{R}(\ana)+\b)}}
\begin{document}
	
\title{Approximal operator with application to audio inpainting}

\author{Ondřej Mokrý}
\ead{170583@vutbr.cz}
\author{Pavel Rajmic\corref{cor1}}
\ead{rajmic@feec.vutbr.cz}

\address{Signal Processing Laboratory,
	Department of Telecommunications,\\
	Faculty of Electrical Engineering and Communication,
	Brno University of Technology,
	Technická 12,
	612 00 Brno,
	Czech Republic}

\cortext[cor1]{Corresponding author.
The work was supported by the Czech Science Foundation (GA\v{C}R) project number 20-29009S.
The authors would like to thank Rémi Gribonval for sharing his expertise and discussing the manuscript.}

\begin{abstract}
In their recent evaluation of time-frequency representations and structured sparsity approaches to audio inpainting,
Lieb and Stark (2018) have used a~particular mapping as a~proximal operator.
This operator serves as the fundamental part of an iterative numerical solver.
However, their mapping is improperly justified.
The present article proves that their mapping is indeed a proximal operator,
and also derives its proper counterpart.
Furthermore, it is rationalized that Lieb and Stark's operator can be understood as an approximation of the proper mapping.
Surprisingly, in most cases, such an approximation
(referred to as the approximal operator)
is shown to provide even better numerical results in audio inpainting compared to its proper counterpart, while being computationally much more effective.

\end{abstract}

\begin{keyword}
	proximal operator, proximal algorithms, approximation, sparsity, audio inpainting
\end{keyword}

\maketitle

\section{Introduction}

	In signal and image processing, the so-called proximal splitting algorithms represent an effective way of finding numerical solutions to various problems \cite{combettes2011proximal}.
	However, despite their conceptual simplicity, proximal algorithms are not always computationally tractable.
	For instance,
	in the area of audio signal restoration,
	it is often necessary to handle the proximal operator of a composition of a~linear (or affine) transform and a~functional.
	This happens in many variations of audio inpainting tasks
	\cite{javevr86,Adler2012:Audio.inpainting,Lieb2018:Audio.Inpainting}
	(further discussed in Sec.\,\ref{sec:experiments}),
	audio declipping \cite{SiedenburgKowalskiDoerfler2014:Audio.declip.social.sparsity, ZaviskaRajmicSchimmel2019:Psychoacoustics.l1.declipping, Kitic2013:Consistent.iter.hard.thresholding, Kitic2015:Sparsity.cosparsity.declipping},
	and audio dequantization \cite{RenckerBachWangPlumbley2018:Fast.iterative.shrinkage.declip.dequant-iTwist18, BrauerGerkmannLorenz2016:Sparse.reconstruction.of.quantized.speech.signals,ZaviskaRajmic2020:Dequantization}.
	
	To introduce the concept more specifically, let $L\colon\mathbb{W}\to\mathbb{V}$ be a linear mapping between two vector spaces, and let $g\colon\mathbb{V}\to\RR$ be a convex lower semi-continuous functional.
	We are interested in computing the proximal operator $\prox_f$,
	where $f=g\circ L$ (the symbol $\circ$ shall denote the composition of two functions),
	i.e.\ the mapping $L$ is applied first, followed by $g$.
	
	In some cases, explicit formulas for $\prox_{g\circ L}$ are available.
	For instance, \cite{combettes2011proximal} provides an explicit formula for the case of real-valued operator $L$
	between finite-dimensional spaces such that $LL^\top=\alpha\Id$.
	A similar result is presented in \cite{RajmicZaviskaVeselyMokry2019:Axioms}, where $L$ is assumed to be complex-valued and satisfying the condition that $L\adjoint{L}$ is diagonal
	(the asterisk denotes the adjoint operator).
	However, it is limited to the case when $\prox_{g}$ is the operator of projection onto a~box-type set.
	
	In the present contribution, we provide an analysis of the case when $L^\top L=\alpha\Id$, which is closely related to
	\cite{combettes2011proximal},
	and this is motivated by a recent signal processing application
	\cite{Lieb2018:Audio.Inpainting}.
	In Sec.\,\ref{sec:theory}, we derive a formula for $\prox_{g\circ L}$ in such a~scenario.
	Furthermore, an approximation of the derived proximal operator is introduced and analyzed,
	provided that an explicit formula for $\prox_{\alpha g}$ exists.
	Sec.\,\ref{sec:experiments} shows its usefulness in the case of sparsity-based audio inpainting.
	The error arising from computing the proximal step of an iterative algorithm only approximately
	is evaluated on the example of audio inpainting.
		
	Throughout the whole article, the symbol $\norm{\cdot}$ may denote a general norm on a Hilbert space $\mathbb{V}$,
	the common $\ell_2$ norm on a~finite-dimensional real or complex vector space,
	or the induced operator norm.
	The particular case should be clear from the context.
	The inner product inducing $\norm{\cdot}$ will be denoted $\langle\cdot,\cdot\rangle$.
	Should any other (pseudo)norm appear, it will be identified using the lower index notation, e.g.\ $\norm{\cdot}_0$, $\norm{\cdot}_1$.

\section{Proximal operator of a composition of a proper convex function and an affine mapping and its approximation}
\label{sec:theory}

\subsection{Theoretical proposition}

	It has already been stated that the novel proposition of the paper is related to the known formula for $\prox_{g\circ L}$ in the case of semi-orthogonal $L$, i.e. $LL^\top =\alpha\Id$.
	To build upon this relation, we start by quoting the corresponding lemma.
	
	\begin{lemma}[the original lemma from {\cite[p.~140]{Beck2017:First.Order.Methods}}]
		\label{lemma:the.original}
		Let $g\colon\RR^m\to\RR\cup\{\infty\}$ be a~proper convex function, and let $f(\x) = g(L\x+\b)$, where $\b\in\RR^m$ and $L\colon\mathbb{V}\to\RR^m$ is a~linear transformation satisfying $L\transp{L}=\alpha\Id$ for some constant $\alpha>0$.
		Then for any $\x\in\mathbb{V}$,
		\begin{equation}
		\prox_f(\x) = \x + \alpha^{-1}\transp{L}\left(\prox_{\alpha g}(L\x+\b)-L\x-\b\right).
		\label{eq:the.original}
		\end{equation}
	\end{lemma}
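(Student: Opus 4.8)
The plan is to work directly from the variational definition of the proximal operator,
\[
\prox_f(\x) = \argmin_{\u}\left\{ g(L\u+\b) + \tfrac{1}{2}\norm{\u-\x}^2 \right\},
\]
and to verify that the right-hand side of \eqref{eq:the.original} is its unique minimizer. Since $g$ is proper convex and $\u\mapsto L\u+\b$ is affine, the objective is convex, and the added quadratic term makes it $1$-strongly convex; hence a minimizer exists and is unique, and it suffices to exhibit a single point at which the optimality condition holds. I would deliberately avoid invoking the subdifferential chain rule $\partial(g\circ L)=\transp{L}\,\partial g(L\cdot+\b)$ (which would drag in a constraint-qualification argument) and instead extract the needed inequality from the subgradient inequality for $g$ alone.

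Set $\q=\prox_{\alpha g}(L\x+\b)$ and let $\p$ denote the candidate value, so that $\p=\x+\alpha^{-1}\transp{L}(\q-L\x-\b)$. The first key step is the algebraic collapse powered by semi-orthogonality: applying $L$ and adding $\b$,
\[
L\p+\b = L\x+\b+\alpha^{-1}(L\transp{L})(\q-L\x-\b) = L\x+\b+(\q-L\x-\b)=\q,
\]
where I used $L\transp{L}=\alpha\Id$. Thus the affine image of the candidate is exactly $\q$, so in particular $f(\p)=g(\q)$. Next I read off a subgradient from the optimality condition of the inner proximal step: $\q=\prox_{\alpha g}(L\x+\b)$ is characterized by $(L\x+\b)-\q\in\partial(\alpha g)(\q)=\alpha\,\partial g(\q)$, so the vector $\h:=\alpha^{-1}\big((L\x+\b)-\q\big)$ lies in $\partial g(\q)$. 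By construction this gives the clean identity $\x-\p=\transp{L}\h$.

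Finally I would close the argument with the subgradient inequality $g(L\u+\b)\ge g(\q)+\langle \h,\, L\u+\b-\q\rangle$, valid for every $\u$. Using $\q=L\p+\b$, the inner product rewrites as $\langle \h, L(\u-\p)\rangle=\langle\transp{L}\h,\u-\p\rangle=\langle\x-\p,\u-\p\rangle$, while semi-orthogonality again yields $\norm{\p-\x}^2=\norm{\transp{L}\h}^2=\langle \h, L\transp{L}\h\rangle=\alpha\norm{\h}^2$. Substituting both into the objective and completing the square, the difference between the objective evaluated at $\u$ and at $\p$ collapses to $\tfrac{1}{2}\norm{\u-\p}^2\ge 0$, with equality only when $\u=\p$. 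This identifies $\p$ as the unique minimizer, i.e.\ $\prox_f(\x)=\p$.

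I expect the main obstacle to be conceptual rather than computational: recognizing that the condition $L\transp{L}=\alpha\Id$ is precisely what forces $L\p+\b$ to collapse onto $\q$, making a subgradient of $g$ at $\q$ available \emph{for free} from the inner proximal step, and simultaneously turning $\norm{\transp{L}\h}^2$ into $\alpha\norm{\h}^2$. A secondary matter requiring care is well-posedness, namely ensuring that $f=g\circ(L\,\cdot+\b)$ is proper (equivalently, that the shifted range of $L$ meets $\operatorname{dom}g$) so that the proximal operators appearing in the statement are genuinely defined; once this is granted, the strong-convexity argument finishes the proof cleanly.
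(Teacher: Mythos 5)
Your proof is correct, and it takes a genuinely different route from the one the paper relies on. The paper quotes this lemma from Beck and (in its appendix, where the argument is re-used for Lemma~2) follows Beck's derivation: rewrite the prox problem as the constrained problem $\min_{\u,\z}\{g(\z)+\tfrac12\norm{\u-\x}^2\ \text{s.t.}\ \z=L\u+\b\}$, invoke strong Lagrangian duality to get a multiplier $\y$ with $\tilde\u=\x-\transp{L}\y$, and then eliminate $\y$ using $L\transp{L}=\alpha\Id$, which \emph{derives} the formula without knowing it in advance. You instead take the formula as a candidate $\p$, observe that semi-orthogonality collapses $L\p+\b$ onto $\q=\prox_{\alpha g}(L\x+\b)$, extract the subgradient $\h=\alpha^{-1}(L\x+\b-\q)\in\partial g(\q)$ from the inner prox characterization, and verify minimality directly: your computation indeed gives $g(L\u+\b)+\tfrac12\norm{\u-\x}^2\geq g(\q)+\tfrac12\norm{\p-\x}^2+\tfrac12\norm{\u-\p}^2$ for every $\u$, which simultaneously proves existence, uniqueness, and the identity $\prox_f(\x)=\p$ (so your preliminary appeal to strong convexity for existence is actually dispensable, as is the identity $\norm{\p-\x}^2=\alpha\norm{\h}^2$). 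What each approach buys: yours is more elementary and self-contained, avoiding the strong-duality theorem and any constraint-qualification discussion, at the cost of being a verification that requires the answer up front; the duality template is constructive, which is precisely why the paper can adapt it to the harder setting $\transp{L}L=\alpha\Id$ of its Lemma~2, where the correct formula involves the constrained operator $\prox_{\alpha g+\iota_{\mathcal{R}(L)+\b}}$ and would be hard to guess. One point of care your argument shares with the stated lemma: for $\q$ to exist one needs $g$ closed (lower semicontinuous) in addition to proper convex, which is Beck's actual hypothesis and is implicitly assumed here.
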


	Note that when $\prox_{\alpha g}$ is known explicitly, Eq.\,\eqref{eq:the.original} provides an explicit form of $\prox_f$.
	As an example, take $f=\norm{L\cdot}_1=\norm{\cdot}_1\circ L$.
	Then $g$ is the $\ell_1$ norm and the corresponding $\prox_{\alpha g}$ is the soft thresholding operator with the threshold $\alpha$.
	Lemma \ref{lemma:the.original} may also be used when $g$ is the indicator function of a~closed convex set $C\subset \RR^m$,
	\begin{equation}
	g(\x) = \iota_C (\x) = \begin{cases}
	0\quad &\x\in C,\\
	\infty\quad &\x\notin C.
	\end{cases}
	\end{equation}
	In such a case, the proximal operator of $\alpha g$ is the operator of projecting onto $C$, denoted
	$\prox_{\alpha g}(\x) = \proj_C(\x)$.
	Additionally, the frame theory provides an example of $L$ that fits the lemma---it may be the synthesis operator of a~tight frame \edit{in $\RR^m$} \cite{christensen2008,christensen2003,Grochenig2001:Foundations.T-F.analysis,Balazs2011:nonstatgab,Necciari2013:ERBlet}.
	\edit{In brief, a frame is a (commonly overcomplete) set of vectors generating $\RR^m$;
	this allows representation of any vector from $\RR^m$ using a~sequence of coefficients in $\mathbb{V}=\RR^n$.
	These coefficients are obtained by the injective analysis operator.
	Its adjoint, the synthesis operator, produces a vector in $\RR^m$ as a linear combination of the generators.
	In the setting of Lemma \ref{lemma:the.original}, $L$ may represent the synthesis operator and $\transp{L}$ the analysis operator.
	Tight frames are a~special subgroup of frames, for which the condition $L\transp{L}=\alpha\Id$ holds.}
	
	The question is:
	If we want to employ the analysis operator \edit{in place of $L$}, how will formula \eqref{eq:the.original} be affected?
	The following lemma answers the question.
	
	\begin{lemma}[the proposed lemma]
		\label{lemma:the.new}
		Let $g\colon\RR^m\to\RR\cup\{\infty\}$ be a proper convex function, and let $f(\x) = g(\ana\x+\b)$, where $\b\in\RR^m$ and $\ana\colon\mathbb{V}\to\RR^m$ is a~linear transformation satisfying $\syn\!\ana=\alpha\Id$ for some constant $\alpha>0$.
		Then for any $\x\in\mathbb{V}$,
		\begin{equation}
		\prox_f(\x) = \alpha^{-1}\syn\left(\prox_{\alpha g + \irab}(\ana\x+\b)-\b\right),
		\label{eq:the.new}
		\end{equation}
		where $\irab$ is the indicator function of the affine space which we obtain by shifting the range space of $\ana$ by a~vector $\b$.
	\end{lemma}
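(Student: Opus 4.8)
The plan is to start from the variational definition of the proximal operator and reduce the minimization over $\mathbb{V}$ to a constrained minimization over the affine space $\RA+\b$ via a change of variables. First I would record the elementary consequences of the hypothesis $\syn\ana=\alpha\Id$ with $\alpha>0$: the operator $\ana$ is injective, $\alpha^{-1}\syn$ is a left inverse of $\ana$, the identity $\norm{\ana\u-\ana\vv}^2=\alpha\norm{\u-\vv}^2$ holds for all $\u,\vv\in\mathbb{V}$, and $\alpha^{-1}\ana\syn$ is the orthogonal projection onto $\RA$ (being symmetric, idempotent, and acting as the identity on $\RA$).

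The core step is the substitution $\z=\ana\u+\b$. By the variational definition,
\[
\prox_f(\x)=\argmin_{\u\in\mathbb{V}}\left\{g(\ana\u+\b)+\tfrac{1}{2}\norm{\u-\x}^2\right\}.
\]
Since $\ana$ is injective, the map $\u\mapsto\z=\ana\u+\b$ is a bijection from $\mathbb{V}$ onto $\RA+\b$ with inverse $\u=\alpha^{-1}\syn(\z-\b)$. Using $\ana\u=\z-\b$ together with the norm identity, the quadratic penalty rewrites as $\tfrac{1}{2}\norm{\u-\x}^2=\tfrac{1}{2\alpha}\norm{\ana\u-\ana\x}^2=\tfrac{1}{2\alpha}\norm{\z-(\ana\x+\b)}^2$, so the problem becomes
\[
\z^\star=\argmin_{\z\in\RA+\b}\left\{g(\z)+\tfrac{1}{2\alpha}\norm{\z-(\ana\x+\b)}^2\right\}.
\]

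Next I would fold the constraint into the objective as the indicator $\irab$ and multiply by $\alpha>0$, which leaves the minimizer unchanged and satisfies $\alpha\cdot\irab=\irab$; this gives
\[
\z^\star=\argmin_{\z}\left\{\alpha g(\z)+\irab(\z)+\tfrac{1}{2}\norm{\z-(\ana\x+\b)}^2\right\}=\prox_{\alpha g+\irab}(\ana\x+\b).
\]
Substituting back through $\u=\alpha^{-1}\syn(\z^\star-\b)$ then yields exactly Eq.\,\eqref{eq:the.new}.

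The main obstacle, and the point distinguishing this lemma from Lemma \ref{lemma:the.original}, is that here $\ana$ is injective rather than surjective, so $\RA$ is in general a \emph{proper} subspace of $\RR^m$. The change of variables therefore does not cover all of $\RR^m$, and this restriction must be propagated through the minimization --- which is precisely the role played by the added indicator $\irab$ that is absent from \eqref{eq:the.original}. I would also verify the consistency of the back-substitution: the minimizer $\z^\star$ produced by $\prox_{\alpha g+\irab}$ lies in $\RA+\b$ by virtue of the indicator, and applying $\alpha^{-1}\syn$ recovers the corresponding $\u$ because $\alpha^{-1}\ana\syn$ restricts to the identity on $\RA$, so no information is lost in passing back and forth.
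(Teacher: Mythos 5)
Your proof is correct, and it reaches \eqref{eq:the.new} by a genuinely different route than the paper. The paper mirrors Beck's strong-duality proof of Lemma \ref{lemma:the.original}: it forms the constrained problem \eqref{eq:const.prob}, invokes strong duality to produce a multiplier $\y$ with $\tilde{\u}=\x-\syn\y$ (Eq.\,\eqref{eq:tilde.u}), and only afterwards applies $\syn$ to the feasibility relation \eqref{eq:important.relation}, stressing that this step is an equivalence only for $\tilde{\z}-\b\in\RA$ --- which is exactly where $\irab$ enters. You avoid duality altogether: from $\syn\!\ana=\alpha\Id$ with $\alpha>0$, the operator $\ana$ is injective, so $\u\mapsto\ana\u+\b$ is a bijection of $\mathbb{V}$ onto $\RA+\b$ with explicit inverse $\alpha^{-1}\syn(\cdot-\b)$, and the identity $\norm{\ana\u-\ana\x}^2=\alpha\norm{\u-\x}^2$ turns the quadratic penalty directly into $\frac{1}{2\alpha}\norm{\z-(\ana\x+\b)}^2$; the indicator $\irab$ then appears simply as the image of the change of variables. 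From that point on the two proofs coincide (absorb the constraint into the indicator, rescale by $\alpha$, recognize $\prox_{\alpha g+\irab}$ as in \eqref{eq:z.is.prox}, substitute back). Your route is shorter and more elementary, and it makes the conceptual role of $\irab$ transparent; the paper's route buys a line-by-line parallel with Beck's argument, making visible exactly where the two lemmas diverge, and it supplies the Lagrangian \eqref{eq:aligned.first.row} that the paper reuses in \ref{appendix:complex} to justify the complex-valued case. One caveat common to both proofs: existence and uniqueness of the minimizers require $g$ closed and $f$ proper (i.e., the domain of $g$ must intersect $\RA+\b$); like the paper, you use these standing assumptions implicitly, so this is not a gap specific to your argument.
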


	Although the proof is given in \ref{appendix:proofs} including the vector $\b$,
	we will for simplicity further assume $\b = \mathbf{0}$.
	
	From the viewpoint of frame theory, the operator $\ana$ in Lemma \ref{lemma:the.new} is the analysis operator of a tight frame.
	Suppose that $\mathbb{V} = \RR^n$, i.e. $\ana\colon\RR^n\to\RR^m$.
	It is a straightforward consequence of the assumption $\syn\!\ana = \alpha\Id$ that in such a case,
	it must hold $n\leq m$ due to the property
	\begin{equation}
		n = \rank{\ana^\top\ana} = \rank{\ana} \leq m.
		\label{eq:rank}
	\end{equation} 
	
	Observe that the crucial difference between the two lemmas is that in the latter case,
	the image of the proximal operator of $\alpha g$ is forced to lie in the subspace $\RA$.
	Omitting this condition, Eq.\,\eqref{eq:the.new} would be obtained simply by using Eq.\,\eqref{eq:the.original} and plugging in the property $\syn\!\ana = \alpha\Id$.
	Such a step is correct when $\ana$ is surjective, in which case $m = n$ (as a~consequence of Eq.\,\eqref{eq:rank}) and the indicator function $\iota_{\RA}$ equals zero on whole $\RR^n$ and therefore $\prox_{\alpha g + \iota_{\RA}} = \prox_{\alpha g}$.
	However, the consequence of  $\ana$ being full-rank\footnote{%
	In the case of $\ana\colon\RR^n\to\RR^m$, the operator $\ana$ has full rank if \edit{$\rank{\ana} = \min\{m,n\}$}.}
	is that the operator is unitary and by taking $\ana = \transp{L} = L^{-1}$, Lemmas \ref{lemma:the.original} and \ref{lemma:the.new} coincide.
	
	Finally, observe that Lemma \ref{lemma:the.original} provides a~constructive way to compute
	$\prox_{g\circ L}$ when $\prox_{\alpha g}$ is known explicitly.
	On the contrary, Lemma \ref{lemma:the.new} is not constructive in the case of $n < m$,
	since the explicit form of $\prox_{\alpha g}$ does not guarantee the existence of an explicit form of $\prox_{\alpha g + \iota_{\RA}}$.

\subsection{Explicit approximation}

	The strength of Lemma \ref{lemma:the.original} is that it offers an explicit form of $\prox_f$ when
	the explicit form of $\prox_{\alpha g}$ is available.
	This is possible for example in the aforementioned cases of the (weighted) $\ell_1$ norm or the indicator function in the role of $g$.
	However, the proximal operator including an additional restriction, which is the case of $\prox_{\alpha g + \iota_{\RA}}$ in Lemma \ref{lemma:the.new}, is seldom known%
	\footnote{A~rare example is the proximal operator of (weighted) $\ell_1$ norm over a box \cite[pp.~145--146]{Beck2017:First.Order.Methods}.},
	resulting in the need for a~reasonable approximation of $\prox_{\alpha g + \iota_{\RA}}$.
	
	Since
	the orthogonal projection onto $\RA$ in the case of $\syn\!\ana=\alpha\Id$ is expressed easily as \cite{christensen2008}
	\begin{equation}
		\proj_{\RA} (\x) = \alpha^{-1}\ana\syn\x,
		\label{eq:proj.RA}
	\end{equation}
	a~natural possibility is to study the approximation
	\begin{equation}
		\prox_{\alpha g + \iota_{\RA}} \approx \proj_{\RA}\circ\prox_{\alpha g},
	\end{equation}
	i.e.\ the composition of two known operators.
	Such an approximation will thus take the form of 
	\begin{equation}
		\badprox_{f}(\x) = \alpha^{-1}\syn\left(\proj_{\RA}\left(\prox_{\alpha g}(\ana\x)\right)\right),
		\label{eq:bad.prox}
	\end{equation}
	where we introduced the denotation `$\badprox$' for the \emph{approximal operator}.

	The most important property of $\badprox_f$ is pointed out in the following lemma.
	\begin{lemma}
		Under the conditions of Lemma \ref{lemma:the.new},
		the approximal operator is
		\begin{equation}
			\badprox_{f}(\x) = \alpha^{-1}\syn\prox_{\alpha g}(\ana\x)
			\label{eq:shortened.bad.prox}
		\end{equation}
		and it is a~proximal operator of some convex lower semi-continuous function.
		\label{lemma:approx.is.prox}
	\end{lemma}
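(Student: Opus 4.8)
The plan is to first carry out the announced algebraic simplification and then to recognise the resulting map as a proximal operator through the characterisation of proximity operators. For the simplification I would substitute the closed form $\proj_{\RA}(\x)=\alpha^{-1}\ana\syn\x$ from Eq.~\eqref{eq:proj.RA} into Eq.~\eqref{eq:bad.prox} and use the hypothesis $\syn\ana=\alpha\Id$ in the form $\syn\ana\syn=\alpha\syn$. This collapses the two nested applications of $\syn$ and reduces $\badprox_f$ to Eq.~\eqref{eq:shortened.bad.prox}; this step is purely mechanical.

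The substantive part is to show that $\badprox_f=\alpha^{-1}\syn\prox_{\alpha g}\ana$ is a proximal operator. I would invoke the characterisation that a single-valued operator on a Hilbert space is the proximal operator of some proper convex lsc function if and only if it is nonexpansive and coincides with the gradient of a convex function (Moreau; Gribonval and Nikolova). The observation enabling this is that $\prox_{\alpha g}$ is itself a gradient of a convex function: writing $M$ for the Moreau envelope of $\alpha g$, the function $h_0:=\tfrac12\norm{\cdot}^2-M$ is convex and continuously differentiable with $\nabla h_0=\prox_{\alpha g}$. Setting $\Psi:=\alpha^{-1}(h_0\circ\ana)$, convexity is preserved under composition with the linear map $\ana$ and positive scaling, and the chain rule gives $\nabla\Psi(\x)=\alpha^{-1}\syn\prox_{\alpha g}(\ana\x)=\badprox_f(\x)$. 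Hence $\badprox_f$ is the gradient of a convex function.

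It then remains to verify nonexpansiveness, which is a routine norm estimate: from $\syn\ana=\alpha\Id$ one obtains $\norm{\ana\vv}=\sqrt{\alpha}\,\norm{\vv}$ and $\norm{\syn\w}\le\sqrt{\alpha}\,\norm{\w}$ (the latter because $\ana\syn=\alpha\proj_{\RA}$ with $\proj_{\RA}$ an orthogonal projection), while $\prox_{\alpha g}$ is $1$-Lipschitz; chaining these three bounds yields $\norm{\badprox_f(\x)-\badprox_f(\y)}\le\norm{\x-\y}$. With both hypotheses of the characterisation in place, the conclusion follows.

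The main obstacle I anticipate is the middle step, namely recognising that $\prox_{\alpha g}$ is the gradient of a convex potential and producing that potential explicitly through the Moreau envelope. One must also be careful that nonexpansiveness is genuinely required alongside the gradient-of-convex property: it is exactly what rules out spurious candidates (for instance $\vv\mapsto 2\vv$, the gradient of a convex function that is \emph{not} a proximal operator), so both conditions of the characterisation have to be checked, and the nonexpansiveness estimate is precisely the property equivalent to $\tfrac12\norm{\cdot}^2-\Psi$ being convex.
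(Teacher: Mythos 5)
Your proof is correct and takes essentially the same route as the paper: both verify Moreau's two-part characterisation of proximal operators (a selection of the subdifferential of a convex lower semi-continuous function, plus nonexpansiveness), with the same chain-rule composition through $\ana$ and the same operator-norm estimate $\norm{\alpha^{-1}\syn}\,\norm{\ana}=\alpha^{-1/2}\cdot\sqrt{\alpha}=1$. The only difference is cosmetic: where the paper abstractly invokes the existence of a convex $\eta$ with $\prox_{\alpha g}(\x)\in\partial\eta(\x)$ and sets $\psi=\alpha^{-1}\eta\circ\ana$, you construct the potential explicitly via the Moreau envelope, $\Psi=\alpha^{-1}\bigl(\tfrac12\norm{\cdot}^2-M\bigr)\circ\ana$, which is a valid (and slightly more constructive) choice of the same object.
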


	Note that the shortened form of $\badprox_f$ in Eq.\,\eqref{eq:shortened.bad.prox} was obtained by plugging \eqref{eq:proj.RA} into \eqref{eq:bad.prox} and using the above-mentioned property $\syn\!\ana=\alpha\Id$.
	
	The lemma is proved in \ref{appendix:proofs}, justifying two properties of proximal operators,
	one of which is the non-expansivity of $\badprox_{f}$.
	As mentioned in \cite[p.\,7]{Gribonval2018:Characterization.of.prox},
	the non-expansivity plays a~role in the convergence analysis of iterative proximal algorithms.

	\edit{Recently, a~proposition similar to Lemma \ref{lemma:approx.is.prox} has been independently proven in
	\cite[Theorem 3.3]{Hasannasab2020:Parseval.Proximal.Neural.Networks}.
	Furthermore, \cite[Theorem 3.4]{Hasannasab2020:Parseval.Proximal.Neural.Networks} proposes a~formula for the function $\phi$ such that $\badprox_f=\prox_\phi$, which is summarized in the following lemma.
	\begin{lemma}[{adapted based on \cite[Theorem 3.4]{Hasannasab2020:Parseval.Proximal.Neural.Networks}}]
		Denote $g\conv h$ the infimal convolution of the functions $g,h\colon\RR^m\to\RR\cup\{\infty\}$, defined pointwise as
		\begin{equation}
			(g\conv h)(\x) = \inf_{\y\in\RR^m}\left\{ g(\y) + h(\x-\y) \right\}\quad\forall\x\in\RR^m.
			\label{eq:infimal.convolution}
		\end{equation}
		The operator $\badprox_{f}$ defined by Eq.\,\eqref{eq:shortened.bad.prox}, assuming $\syn\!\ana=\alpha\Id$, is the proximal operator of the function $\phi\colon\mathbb{V}\to\RR\cup\{\infty\}$ defined pointwise as
		\begin{equation}
			\begin{split}	
			\phi(\x) &= \frac{1}{\alpha}\left[\alpha g\conv \left(\tfrac{1}{2}\norm{\cdot}^2 + \iota_{\mathcal{N}(\syn)}\right)(\ana\x)\right] \\
			&= \alpha^{-1}\inf_{\y\in\RR^m}\left\{ \alpha g(\y) + \frac{1}{2}\norm{\ana\x-\y}^2 + \iota_{\mathcal{N}(\syn)}(\ana\x-\y) \right\},
			\end{split}
			\label{eq:phi}
		\end{equation}
		where $\mathcal{N}(\syn)\subset\RR^m$ denotes the null space of the operator $\syn\colon\RR^m\to\mathbb{V}$.
		\label{lemma:phi}
	\end{lemma}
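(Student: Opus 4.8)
The plan is to avoid verifying the subdifferential optimality condition directly and instead recognize $\phi$ as a composition $\phi = \Phi\circ\ana$, so that the already-proven Lemma \ref{lemma:the.new} can be invoked. Writing $q = \tfrac12\norm{\cdot}^2 + \iota_{\mathcal{N}(\syn)}$, formula \eqref{eq:phi} says precisely that $\phi = \Phi\circ\ana$ (with $\b=\mathbf{0}$) for $\Phi = \alpha^{-1}(\alpha g\conv q)$. The first thing I would do is check that $\Phi$ is a proper convex function, so that the lemma is applicable: convexity is immediate, since the infimal convolution of the convex functions $\alpha g$ and $q$ is convex, and properness follows from $q\geq 0$ together with the properness of $g$. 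This regularity check (including the lower semicontinuity needed for $\prox_{\alpha\Phi+\iota_{\RA}}$ to be single-valued) is where I expect the only genuine technical care to be required; the explicit evaluation below in fact exhibits the relevant minimizer, which retroactively confirms well-definedness.

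Granting this, Lemma \ref{lemma:the.new} applied with $\Phi$ in the role of $g$ yields
\[
  \prox_\phi(\x) = \alpha^{-1}\syn\,\prox_{\alpha\Phi+\iota_{\RA}}(\ana\x),
\]
so the task reduces to evaluating $\prox_{\alpha\Phi+\iota_{\RA}}$ on the subspace $\RA$. The structural fact I would exploit is that $\syn\!\ana=\alpha\Id$ forces the orthogonal splitting $\RR^m = \RA \oplus \mathcal{N}(\syn)$; accordingly I decompose any $\y\in\RR^m$ as $\y=\y_R+\y_N$. Unfolding the infimal convolution, the finiteness constraint $\w-\y\in\mathcal{N}(\syn)$ pins the range part of $\y$ to $\w$ and identifies the penalty $\tfrac12\norm{\w-\y}^2$ with $\tfrac12\norm{\y_N}^2$, giving for $\w\in\RA$
\[
  (\alpha g\conv q)(\w) = \inf_{\y:\,\y_R=\w}\Big\{\alpha g(\y) + \tfrac12\norm{\y_N}^2\Big\}.
\]

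The crux is then to merge the outer minimization over $\w\in\RA$ defining the proximal step with this inner infimum. Setting $\z=\ana\x\in\RA$ (so that $\z_N=\mathbf{0}$ and $\z_R=\z$) and using the Pythagorean identity $\norm{\y-\z}^2 = \norm{\y_R-\z}^2 + \norm{\y_N}^2$, the combined objective collapses to $\alpha g(\y)+\tfrac12\norm{\y-\z}^2$, whose minimizer is $\prox_{\alpha g}(\z)$ and whose optimal $\w$ is the range part of this minimizer. Hence $\prox_{\alpha\Phi+\iota_{\RA}}(\z) = \proj_{\RA}\big(\prox_{\alpha g}(\z)\big)$ for $\z\in\RA$. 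I would close the argument with the identity $\syn\proj_{\RA}=\syn$, which follows from \eqref{eq:proj.RA} and $\syn\!\ana=\alpha\Id$; substituting it into the displayed formula gives $\prox_\phi(\x)=\alpha^{-1}\syn\,\proj_{\RA}(\prox_{\alpha g}(\ana\x))=\alpha^{-1}\syn\,\prox_{\alpha g}(\ana\x)=\badprox_f(\x)$, as claimed. Everything past the initial regularity verification is bookkeeping with the orthogonal decomposition.
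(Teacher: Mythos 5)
Your proof is correct, but it takes a genuinely different route from the paper's. The paper never proves Lemma~\ref{lemma:phi} from scratch: it obtains it by translating \cite[Theorem 3.4]{Hasannasab2020:Parseval.Proximal.Neural.Networks} (with the help of \cite[Lemma 3.1]{Hasannasab2020:Parseval.Proximal.Neural.Networks}) into the present notation through the pseudoinverse identity $\pinv{\ana}=\inv{(\syn\!\ana)}\syn=\alpha^{-1}\syn$. You instead stay entirely inside the paper's framework: reading \eqref{eq:phi} as $\phi=\Phi\circ\ana$ with $\alpha\Phi=\alpha g\conv\bigl(\tfrac{1}{2}\norm{\cdot}^2+\iota_{\mathcal{N}(\syn)}\bigr)$, you apply Lemma~\ref{lemma:the.new} to write $\prox_\phi(\x)=\alpha^{-1}\syn\prox_{\alpha\Phi+\iota_{\RA}}(\ana\x)$, and then evaluate the constrained prox at points of $\RA$ via the orthogonal splitting $\RR^m=\RA\oplus\mathcal{N}(\syn)$; merging the outer minimization with the inner infimum and using the Pythagorean identity collapses everything to the problem defining $\prox_{\alpha g}(\ana\x)$, which gives $\prox_{\alpha\Phi+\iota_{\RA}}(\ana\x)=\proj_{\RA}\bigl(\prox_{\alpha g}(\ana\x)\bigr)$, and $\syn\proj_{\RA}=\syn$ finishes the argument. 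The collapsing step is sound: the joint objective is proper and strongly convex, so its minimizer exists, is unique, and its $\RA$-component is the unique minimizer of the marginal problem. Your route buys self-containedness (no dependence on \cite{Hasannasab2020:Parseval.Proximal.Neural.Networks}) and a structural insight the paper leaves implicit: $\phi$ is exactly the function for which the generally intractable operator $\prox_{\alpha g+\iota_{\RA}}$ of Lemma~\ref{lemma:the.new} degenerates into the tractable composition $\proj_{\RA}\circ\prox_{\alpha g}$ of \eqref{eq:bad.prox}. The paper's route buys brevity and outsources all regularity questions to the cited theorem.

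Two regularity points in your write-up need tightening, though neither is fatal. First, properness of $\Phi$ does not follow from $q\geq 0$ and properness of $g$ alone: that only yields $\alpha g\conv q\geq\inf\alpha g$, which is vacuous when $g$ is unbounded below (a linear $g$ satisfies all the hypotheses). The standard repair is that a proper closed convex $g$ admits an affine minorant $g\geq\langle\a,\cdot\rangle+b$, under which the infimand in \eqref{eq:phi} is bounded below along the feasible directions $\mathcal{N}(\syn)$. Second, invoking Lemma~\ref{lemma:the.new} with $\Phi$ in the role of $g$ requires $\Phi$ to be closed as well---the paper's proof of that lemma uses properness, closedness and convexity to guarantee existence and uniqueness of its minimizers. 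Your \qm{retroactive} remark exhibits the needed minimizers at the points $\ana\x$, but lower semicontinuity of $\Phi$ itself still deserves a justification, e.g.\ via the standard fact that the infimal convolution of a proper closed convex function with a supercoercive proper closed convex function (such as $q$) is proper, closed and exact. With these two patches the argument is complete.
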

	It is straightforward to derive the lemma from \cite[Theorem 3.4]{Hasannasab2020:Parseval.Proximal.Neural.Networks}, using \cite[Lemma 3.1]{Hasannasab2020:Parseval.Proximal.Neural.Networks} and the explicit form of the pseudoinverse of $\ana$,
	i.e.\ $\pinv{\ana} = \inv{(\syn\!\ana)}\syn = \alpha^{-1}\syn$.
	Although Eq.\,\eqref{eq:phi} provides a compact definition for the function $\phi$, it still includes the infimal convolution, impractical in numerical applications.
	A~minor insight into the properties of $\phi$ is provided by the following lemma.
	\begin{lemma}
		\label{lemma:leq}
		For any $\x\in\mathbb{V}$, $\phi(\x)\leq f(\x)$.
	\end{lemma}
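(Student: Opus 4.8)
The plan is to exploit the fact that $\phi$ is defined through an infimum in Eq.~\eqref{eq:phi}, so that substituting any admissible value of the inner variable $\y$ immediately yields an upper bound on $\phi(\x)$. Since we work under the assumption $\b = \mathbf{0}$, the target function is simply $f(\x) = g(\ana\x)$, and the obvious candidate to feed into the infimal convolution is $\y = \ana\x$ itself.

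First I would write out the infimum form of $\phi$ appearing in Eq.~\eqref{eq:phi} and evaluate the bracketed expression at $\y = \ana\x$. With this choice the quadratic penalty $\tfrac{1}{2}\norm{\ana\x-\y}^2$ vanishes, and the argument of the indicator becomes $\ana\x - \ana\x = \mathbf{0}$. Because $\mathbf{0}$ lies in the linear subspace $\mathcal{N}(\syn)$, the term $\iota_{\mathcal{N}(\syn)}(\mathbf{0})$ equals zero as well, so the bracket collapses to $\alpha g(\ana\x)$. Dividing by $\alpha$ then gives $\phi(\x) \leq \alpha^{-1}\cdot\alpha g(\ana\x) = g(\ana\x) = f(\x)$, which is precisely the claim.

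I do not anticipate any real obstacle, as the inequality is a direct consequence of the infimum being bounded above by any particular feasible evaluation. The single point deserving explicit mention is the feasibility of the chosen $\y = \ana\x$, i.e.\ that it does not send the indicator term to $+\infty$; this holds trivially because $\mathbf{0}$ belongs to every linear subspace, and in particular to $\mathcal{N}(\syn)$.
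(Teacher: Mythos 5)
Your proof is correct and takes exactly the same route as the paper: evaluate the infimum in Eq.~\eqref{eq:phi} at the feasible point $\y=\ana\x$, so the quadratic and indicator terms vanish and the bound $\phi(\x)\leq g(\ana\x)=f(\x)$ follows. Your explicit remark that $\mathbf{0}\in\mathcal{N}(\syn)$ ensures feasibility is a small point the paper leaves implicit, but otherwise the arguments coincide.
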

	\begin{proof}
		From Eq.\,\eqref{eq:phi}, it holds
		(putting $\y=\ana\x$)
		\begin{equation}
			\phi(\x) \leq \alpha^{-1}\left\{ \alpha g(\ana\x) + \frac{1}{2}\norm{\ana\x-\ana\x}^2 + \iota_{\mathcal{N}(\syn)}(\ana\x-\ana\x) \right\} = g(\ana\x) = f(\x)
		\end{equation}
		for any $\x\in\mathbb{V}$.
	\end{proof}
	Besides this theoretical result, an illustrative example of $\phi$ for $g=\norm{\cdot}_1$ is included in \ref{appendix:example}.}
	In the following section, an application from the field of audio processing demonstrates the suitability of $\badprox_f$ as an approximation of $\prox_f$.

\section{Experiments}
\label{sec:experiments}

\subsection{Sparsity-based audio inpainting}

	Audio inpainting is a rather modern term for the task of filling highly degraded or missing samples of digital audio \cite{Adler2012:Audio.inpainting}.
	The same task is referred to as the interpolation of missing samples \cite{javevr86, Etter1996:Interpolation_AR} or packet-loss concealment \cite{Lindblom2002:PLC.sinusoidal, Rodbro.Packet.Loss.Concealment}.
	
	Popular audio inpainting methods are based on the assumption that musical audio signals are sparse with respect to a suitable time-frequency (TF) transform.
	To proceed with the formalization of this assumption, denote $\sig\in\CC^n$ the observed (i.e.\ degraded) signal
	and let
	$\mask\colon\CC^n\to\CC^n$ be the operator which fills with zeros the missing information in the signal.
	Denote $\Gamma$ the set of signals consistent with the observed signal $\sig$,
	\begin{equation}
		\Gamma = \left\{ \x\in\CC^n\mid\mask\x = \mask\y \right\}.
	\end{equation}
	The inpainting task is then formulated as the following optimization task:
	\begin{center}
		\emph{Find the signal from $\Gamma$ with the corresponding TF coefficients as sparse as possible.}
	\end{center}	
	Let us recall two operators:
	Let $\ana\colon\CC^n\to\CC^m$, $m\geq n$, be the analysis operator, expanding the time-domain signal into the vector of TF coefficients,
	and let $\adjoint{\ana}\colon\CC^m\to\CC^n$ be its synthesis counterpart, producing a signal, given the coefficients.
	With this notation, the above formulation can be understood in two ways:%
	\begin{subequations}\label{eq:formulation}
		\begin{align}
			&\argmin_\z\;\norm{\z}_0\quad\text{s.\,t.}\quad\adjoint{\ana}\z\in\Gamma,
			\label{eq:formulation.syn}\\			
			&\argmin_\x\;\norm{\ana\x}_0\quad\text{s.\,t.}\quad\x\in\Gamma.
			\label{eq:formulation.ana}
		\end{align}
	\end{subequations}
	The symbol $\norm{\cdot}_0$ denotes sparsity, i.e.\ the pseudonorm that counts the non-zero entries of the argument.
	Since Eq.\,\eqref{eq:formulation.syn} includes the synthesis operator, it is referred to as the \emph{synthesis formulation};
	by the same reasoning, Eq.\,\eqref{eq:formulation.ana} is called the \emph{analysis formulation} \cite{Elad05analysisversus}.

	Note that the two formulations are equivalent only when the operator $\ana$ is unitary,
	i.e.\ it holds $\adjoint{\ana} = \ana^{-1}$.
	However, this is not the case of the commonly used Gabor, wavelet or ERBlet transforms \cite{christensen2003, Necciari2013:ERBlet}.
	
\subsubsection{Convex relaxation}
	
	Formulations \eqref{eq:formulation} are problematic in that they include the $\ell_0$ pseudonorm,
	resulting in the task being NP-hard.
	Two possible classes of methods exist to solve \eqref{eq:formulation}, in general only approximately.
	Either a~non-convex heuristic algorithm is employed to tackle \eqref{eq:formulation},
	e.g.\ the OMP \cite{Adler2012:Audio.inpainting} or SPAIN \cite{MokryZaviskaRajmicVesely2019:SPAIN}.
	Alternatively, the task needs to be relaxed to a convex optimization problem \cite{Bruckstein.etc.2009.SIAMReviewArticle,DonohoElad2003:Optimally}.
	Denoting $\sparse\colon\CC^m\to\RR$ a convex sparsity-related penalty, the convex relaxations of \eqref{eq:formulation.syn} and \eqref{eq:formulation.ana} attain the form
	\begin{subequations}\label{eq:convexformulation}
		\begin{align}
		&\argmin_\z\;\left\{\sparse(\z)+\iota_\Gamma(\adjoint{\ana}\z)\right\},
		\label{eq:convexformulation.syn}\\			
		&\argmin_\x\;\left\{\sparse(\ana\x)+\iota_\Gamma(\x)\right\}.
		\label{eq:convexformulation.ana}
		\end{align}
	\end{subequations}

	The constraints from \eqref{eq:formulation} are now included in the objective function
	using the indicator function $\iota_\Gamma$.
	Note that $\Gamma$ is a convex set by design.
	Thus, in order to have an overall convex problem, $\sparse$ has to be a~convex function, and it shall promote sparsity.
	As an example of a suitable and widely used penalty $\sparse$, we mention the (weighted) $\ell_1$ norm.
	
\subsubsection{Inconsistent reformulation}

	In \eqref{eq:convexformulation}, the solution is forced to be equal to the observed signal in its reliable (non-degraded) parts.
	However, this assumption may be too strong, for instance when the observed signal $\sig$ is noisy.
	In such a case, the alternative reformulations to solve are the so-called inconsistent problems
	\begin{subequations}\label{eq:inconsistentformulation}
		\begin{align}
		&\argmin_\z\;\left\{\sparse(\z)+\lambda\norm{\mask\adjoint{\ana}\z-\mask\sig}\right\},
		\label{eq:inconsistentformulation.syn}\\			
		&\argmin_\x\;\left\{\sparse(\ana\x)+\lambda\norm{\mask\x-\mask\sig}\right\},
		\label{eq:inconsistentformulation.ana}
		\end{align}
	\end{subequations}
	where the parameter $\lambda>0$ balances consistency with the data and the sparsity.

\subsubsection{Solving the task}
	
	Formulations \eqref{eq:convexformulation} and \eqref{eq:inconsistentformulation} consist of sums of lower semicontinuous convex functions.
	Such a scenario allows using proximal algorithms to solve the tasks numerically
	\cite{combettes2011proximal}.
	To find the minimum of the sum of lower semicontinuous convex functions
	$f_1$ and $f_2$
	using the proximal splitting approach, one must be able to evaluate both $\prox_{f_1}$ and $\prox_{f_2}$, or, in the case of differentiable $f_1$ or $f_2$, the corresponding gradient.
	These are evaluated in every iteration, meaning that simple, computationally cheap formulas are preferred.
	
	To choose suitable algorithms for solving \eqref{eq:convexformulation} and \eqref{eq:inconsistentformulation},
	assume that the explicit form of $\prox_\sparse$ is available.
	Note that in practice, this assumption is not too restrictive.
	Sometimes the situation is even the opposite, meaning that a suitable operator in the place of $\prox_\sparse$ is used although an explicit form of $\sparse$ is not available, for instance in the case of the (persistent) empirical Wiener and similar operators \cite{Gribonval2018:Characterization.of.prox,kowalski2012social}.
	Moreover, assume using a~tight frame as the TF transformation,
	i.e.\ $\adjoint{\ana}\ana=\alpha\Id$.
	To solve the synthesis model \eqref{eq:convexformulation.syn}, Lemma~\ref{lemma:the.original} is used (putting $L=\adjoint{\ana}$) to compute
	the prox of the second term,
	\begin{equation}
		\prox_{\iota_\Gamma\circ\adjoint{\ana}}(\x) = \x + \alpha^{-1}\ana\left( \proj_\Gamma(\adjoint{\ana}\x) -\adjoint{\ana}\x \right),
	\end{equation} 
	which enables the use of the Douglas--Rachford (DR) algorithm \cite[Sec.\,IV]{combettes2011proximal}.
	The reason is that for any time-domain signal $\vect{s}\in\CC^n$, the projection $\proj_\Gamma(\vect{s})$ is computed
	simply entry-by-entry by setting the samples at the reliable positions equal to the observed ones, and keeping the rest unchanged.
	
	It should be pointed out that Lemma \ref{lemma:the.original} is used here to handle a~complex-valued operator.
	This is treated in \ref{appendix:complex}, where one can find the argumentation
	why using complex variables in place of real variables does not affect the proposed theoretical results of both lemmas.
	
	To solve the analysis model \eqref{eq:convexformulation.ana},
	$\prox_{\sparse\circ\ana}$ has to be known in order to be used in the DR algorithm;
	formally, it is  constructed using Lemma \ref{lemma:the.new} as
	\begin{equation}
		\prox_{\sparse\circ\ana}(\x) = \alpha^{-1}\adjoint{\ana}\prox_{\alpha \sparse + \iota_{\RA}}(\ana\x).
		\label{eq:good.prox}
	\end{equation}
	The numerical treatment of \eqref{eq:good.prox}, however, requires choosing one of the following three options:
	\begin{itemize}
		\item
			$\prox_{\alpha \sparse + \iota_{\RA}}$ is only approximated by the operator $\badprox_{\sparse\circ\ana}$ defined by Eq.\,\eqref{eq:shortened.bad.prox},
		\item
			$\prox_{\sparse\circ\ana}(\x) = \argmin_{\u}\left\{ \sparse(\ana\u) +\frac{1}{2}\norm{\u - \x}^2 \right\}$ is computed using a~nested iterative subroutine,
			since it is by definition nothing but another optimization task,
		\item
			finally, the Chambolle--Pock (CP) algorithm \cite{ChambollePock2011:First-Order.Primal-Dual.Algorithm} can also be used instead of the DR algorithm
			to solve \eqref{eq:convexformulation.ana} without nested iterations,
			since it allows composing a~convex functional with a~linear mapping.
	\end{itemize}

	In analogy to the above,
	solving \eqref{eq:inconsistentformulation.syn} is not a problem, but for \eqref{eq:inconsistentformulation.ana} the approximation of $\prox_{\sparse\circ\ana}$ is needed.
	For both inconsistent cases, a suitable iterative algorithm to produce the numerical result is ISTA or FISTA (fast iterative shrinkage-thresholding algorithm) \cite{beck2009fast}.

\subsection{Evaluation setting}
	
	Our experiments basically follow and extend the experiment from \cite[Sec.\,3]{Lieb2018:Audio.Inpainting}.
	The motivation is that \cite{Lieb2018:Audio.Inpainting} applies Lemma \ref{lemma:the.original} instead of Lemma \ref{lemma:the.new} to compute $\prox_{\sparse\circ\ana}$.
	As a result, \cite{Lieb2018:Audio.Inpainting}
	actually uses $\badprox_{\sparse\circ\ana}$ (unintentionally).
	This is clear from plugging $\ana$ in place of $L$ in  Eq.\,\eqref{eq:the.original},
	which (recalling $\syn\!\ana=\alpha\Id$) directly produces $\badprox_{\sparse\circ\ana}$.
	The MATLAB codes and data from \cite{Lieb2018:Audio.Inpainting}
	are available at \url{https://github.com/flieb/AudioInpainting}.
	We recomputed their results using the shared code, and on top of that, we aimed at answering the question:
	\begin{center}
	\emph{Does the more accurate (but more complicated) use of\/ $\prox_{\sparse\circ\ana}$ produce better results\\ in the audio inpainting task, compared to the use of\/ $\badprox_{\sparse\circ\ana}$?}
	\end{center}
	Hence, our contribution concerns the analysis model.
	There we decided to use the CP algorithm instead of the DR algorithm in the consistent case,
	while in the inconsistent case,
	the CP algorithm approximates $\prox_{\sparse\circ\ana}$ in each iteration of FISTA.
	
	Exactly the same as in \cite{Lieb2018:Audio.Inpainting}, the algorithms are tested on four musical signals from \cite{siedenburg2013persistent} sampled at 44.1~kHz
	that are degraded by dropping out 80 \% of the samples at random positions.
	As the convex sparsity-related penalty, the $\ell_1$ norm is used, i.e.\ $\sparse(\cdot) = \norm{\cdot}_1$.
	Three different transforms \cite[Sec.\,3.3.1]{Lieb2018:Audio.Inpainting} are used:
		\begin{itemize}
			\item Gabor transform (GAB) with the Hann window of length 23\,ms (1024 samples), time sampling parameter $a = 3.6$\,ms (160 samples) and $M = 3125$ frequency channels,
			\item ERBlet transform (ERB) with $\mathit{qvar} = 0.08$ and $\mathit{bins} = 18$,
			\item wavelet transform (WAV) with $f_\mathrm{min} = 100$\,Hz, $\mathit{bw} = 3$\,Hz and $\mathit{bins} = 120$.
		\end{itemize}
	\edit{For the algorithm specific parameters, see \cite[Sec.\,3.3.3 and 4]{Lieb2018:Audio.Inpainting} and particularly the accompanying GitHub repository with the MATLAB codes (link in Sec.\,\ref{sec:software}).}

	The reconstruction performance is evaluated using the $\snr$, computed in agreement with \cite{Lieb2018:Audio.Inpainting} as
	\begin{equation}
	\snr(\vect{s}, \hat{\vect{s}}) = 20\log_{10}\frac{\sigma(\vect{s})}{\sigma(\vect{s} - \hat{\vect{s}})},
	\label{eq:snr}
	\end{equation}
	where $\vect{s}$ is the original (non-degraded) signal, $\hat{\vect{s}}$ is the reconstructed signal and $\sigma$ denotes the standard deviation.
	\edit{Since we work with real time-domain signals only, both $\vect{s}$ and $\hat{\vect{s}}$ in formula \eqref{eq:snr} are real.}
	It is worth noting that the reliable parts of $\vect{s}$ and $\hat{\vect{s}}$ are not taken into account in Eq.\,\eqref{eq:snr}.

\subsection{Results and discussion}

	Table~\ref{tab:florian} is a slightly reordered reproduction of the table provided in \cite[Sec.\,4]{Lieb2018:Audio.Inpainting}.
	As explained above, the results based on the synthesis model correspond to \eqref{eq:convexformulation.syn} and \eqref{eq:inconsistentformulation.syn},
	whereas the analysis-based results only approximate the solutions to Eq.\,\eqref{eq:convexformulation.ana} and \eqref{eq:inconsistentformulation.ana}, since the operator $\badprox_{\sparse\circ\ana}$ is used.
	
	\begin{table}[h]
		\centering
		\caption{Values of $\snr$ in dB from \cite[Table~2]{Lieb2018:Audio.Inpainting},
		based on the four test signals, three TF dictionaries
		and for both the consistent and the inconsistent approach, corresponding to problems \eqref{eq:convexformulation} and \eqref{eq:inconsistentformulation}, respectively.}
		\label{tab:florian}
		\begin{tabular}{|ll|rrr|rrr|}
			\hline
			\# & & \multicolumn{3}{l|}{DR (consistent)} & \multicolumn{3}{l|}{FISTA (inconsistent)} \\ \cline{3-8} 
			& & GAB  & WAV & ERB  & GAB & WAV & ERB \\ \hline
			1 & Synthesis & $18.7$ & $26.0$ & $26.4$ & $15.5$ & $25.5$ & $25.9$ \\
			  & Analysis  & $16.9$ & $25.9$ & $26.3$ & $18.6$ & $25.2$ & $25.6$ \\
			2 & Synthesis & $20.1$ & $25.9$ & $25.9$ & $16.8$ & $25.1$ & $25.2$ \\
			  & Analysis  & $18.3$ & $25.6$ & $25.6$ & $19.7$ & $25.1$ & $25.2$ \\
			3 & Synthesis & $18.6$ & $19.2$ & $19.3$ & $17.4$ & $18.9$ & $19.1$ \\
			  & Analysis  & $17.9$ & $19.2$ & $19.3$ & $18.5$ & $19.2$ & $19.2$ \\
			4 & Synthesis & $16.2$ & $19.8$ & $20.4$ & $13.6$ & $19.3$ & $20.1$ \\
			  & Analysis  & $15.1$ & $19.7$ & $20.4$ & $16.1$ & $19.7$ & $20.4$ \\ \hline
		\end{tabular}
	\end{table}

	Since our modifications stem from the use of $\badprox_{\sparse\circ\ana}$ in the analysis-based model, Table~\ref{tab:difference1} presents only the results of this model.
	Also, for better readability, only the difference between the values of $\snr$ is shown.
	\begin{table}[h]
		\centering
		\caption{Difference between the values of $\snr$ taken from the original experiment and from its correct implementation.
		The latter uses the CP algorithm instead of the DR algorithm in the consistent case, and in the inconsistent case, $\prox_{\sparse\circ\ana}$ is evaluated using the CP algorithm in each iteration of FISTA.
		Only the results of the analysis-based approach are shown.
		Positive values indicate cases in which the new implementation performs better.}
		\label{tab:difference1}
		\begin{tabular}{|l|rrr|rrr|}
			\hline
			\# & \multicolumn{3}{l|}{DR/CP (consistent)} & \multicolumn{3}{l|}{FISTA (inconsistent)} \\ \cline{2-7} 
			& GAB & WAV & ERB & GAB & WAV & ERB \\ \hline
			1 & $1.71$ & $ 0.09$ & $ 0.15$ & $0.02$ & $ 0.01$ & $ 0.01$ \\
			2 & $1.72$ & $-0.05$ & $-0.04$ & $0.11$ & $-0.02$ & $ 0.00$ \\
			3 & $0.68$ & $-0.02$ & $-0.01$ & $0.03$ & $ 0.00$ & $-0.02$ \\
			4 & $1.11$ & $ 0.07$ & $ 0.00$ & $0.02$ & $-0.01$ & $-0.01$ \\ \hline
		\end{tabular}		
	\end{table}

	Table~\ref{tab:difference1} shows two remarkable results.
	On the right side, the $\snr$ for FISTA is almost independent of whether
	$\prox_{\sparse\circ\ana}$ is computed accurately or not.
	The other note concerns the left side, showing the results of DR and CP algorithms, representing the use of $\badprox_{\sparse\circ\ana}$ and $\prox_{\sparse\circ\ana}$, respectively.
	Here, the more accurate approach with $\prox_{\sparse\circ\ana}$ outperforms the approximate approach with $\badprox_{\sparse\circ\ana}$ only with the Gabor dictionary,
	whereas for wavelets and ERBlets
	\edit{the results are hardly distinguishable in terms of the SNR.}
	
	Recall that the inconsistent analysis model employs a~nested iterative CP algorithm within FISTA,
	which results in the computational cost being remarkably higher compared to the synthesis model. 
	
	Furthermore, we examine the performance of the DR (original) and the CP (new) algorithm in the analysis model while choosing stricter convergence criteria than in the previously described experiment.
	In the implementation of \cite{Lieb2018:Audio.Inpainting}, the parameters indicating convergence are the maximum number of iterations (\texttt{param.maxit}) and the relative norm of solutions in subsequent iterations (\texttt{param.tol}).
	The algorithm stops when either of the criteria is reached.
	The settings of these parameters in the experiments are as follows:
	
	\begin{center}
		\begin{tabular}{lcc}
			criterion & setting for Tables \ref{tab:florian} and \ref{tab:difference1} & setting for Tables \ref{tab:difference2} and \ref{tab:difference3}  \\ \hline
			\texttt{param.maxit} & $200$ & $500$ \\
			\texttt{param.tol} & $10^{-3}$ & $10^{-5}$
		\end{tabular}
	\end{center}

	In Table~\ref{tab:difference2}, the results of synthesis-based model using the DR algorithm are also recomputed with the new choice of parameters.
	Note that the inconsistent approach is omitted, since even with the less strict convergence criteria,
	the results of the two approaches are almost identical.
	For the purpose of direct comparison, Table~\ref{tab:difference3} presents the differences in $\snr$ shown in Table~\ref{tab:difference2}.
	As mentioned above, positive values indicate a~better performance of the implementation based on $\prox_{\sparse\circ\ana}$.
	
	\begin{table}[h]
		\hspace{0.05\linewidth}
		\parbox{0.5\linewidth}{\begingroup\small
			\caption{Values for the DR (original) and the CP (new) algorithms for the analysis-based approach.
			The parameters were set to \texttt{param.maxit = 500} and \texttt{param.tol = 1e-5}.}
			\label{tab:difference2}
			\begin{tabular}{|l|rrr|rrr|}
				\hline
				\# & \multicolumn{3}{l|}{DR} & \multicolumn{3}{l|}{CP} \\ \cline{2-7} 
				& GAB & WAV & ERB & GAB & WAV & ERB \\ \hline
				1 & $18.01$ & $26.44$ & $26.79$ & $18.53$ & $26.61$ & $26.98$ \\
				2 & $19.34$ & $26.23$ & $26.28$ & $19.99$ & $26.34$ & $26.42$ \\
				3 & $18.39$ & $19.28$ & $19.37$ & $18.58$ & $19.24$ & $19.36$ \\
				4 & $15.84$ & $19.77$ & $20.45$ & $16.16$ & $19.76$ & $20.42$ \\ \hline
			\end{tabular}\endgroup
		}
		\hfill
		\parbox{0.3\linewidth}{\begingroup\small
			\caption{Difference in the values of $\snr$ in dB presented in Table~\ref{tab:difference2}. \\}
			\label{tab:difference3}	
			\begin{tabular}{|rrr|}
				\hline
				\multicolumn{3}{|l|}{DR/CP} \\ \cline{1-3} 
				GAB & WAV & ERB \\ \hline
				$0.52$ & $ 0.16$ & $ 0.19$ \\
				$0.65$ & $ 0.11$ & $ 0.14$ \\
				$0.19$ & $-0.04$ & $-0.01$ \\
				$0.33$ & $-0.01$ & $-0.03$ \\ \hline
			\end{tabular}\endgroup
		}
		\hspace{0.05\linewidth}
	\end{table}
	
	It is clear from Table~\ref{tab:difference3} that letting the algorithms converge closer to the actual solution of the corresponding optimization tasks reduces the difference between the more and the less accurate approach.
	Nonetheless, it can still be seen that not only the results in general but also the inaccuracy of the DR algorithm in the analysis-based approach depend
	on the choice of (redundant) time-frequency representation of the audio signal.

\subsection{Software and reproducible research}
	\label{sec:software}

	All the experimental data were generated using MATLAB R2017b while using LTFAT \cite{LTFAT} version 2.3.1 and NSGToolbox \cite{Balazs2011:nonstatgab} version 0.1.0.
	\edit{The example in \ref{appendix:example} is generated using CVX, a package for specifying and solving convex programs \cite{CVX,CVX.paper}.}
	The MATLAB codes are available online at \url{https://github.com/ondrejmokry/ApproximalOperator}.

\section{Conclusions}

	The article presents the proximal operator of a composition of a~convex function with a~linear mapping $\ana$ that satisfies $\syn\!\ana = \alpha\Id$,
	which is the case of $\ana$ being the (redundant) analysis operator of a tight frame.
	So far, a compact formula for such an operator has been known only for the composition with $\syn$ (in the current notation).
	The theoretical derivation does not yield a~really effective way to apply the proximal operator in practice.
	Nevertheless, it is shown that using the fast explicit approximation from \cite{Lieb2018:Audio.Inpainting}, the \emph{approximal operator},
	is not only meaningful, but it is also reasonably close to the proper solution obtained with the exact proximal operator.

	The practical experiment was limited only to audio inpainting scenarios originally performed in \cite{Lieb2018:Audio.Inpainting},
	yet the theoretical result has a~straightforward application to the problems of audio declipping or dequantization as well, or even in the related problems in image and video processing.

\appendix

\section{Proofs}
\label{appendix:proofs}

\begin{proof}[Proof of Lemma~\ref{lemma:the.new}]
	\edit{Recall that $g\colon\RR^m\to\RR\cup\{\infty\}$ is assumed to be a~proper convex function, $f(\x) = g(\ana\x+\b)$, where $\b\in\RR^m$ and $\ana\colon\mathbb{V}\to\RR^m$ is a~linear transformation satisfying $\syn\!\ana=\alpha\Id$ for some constant $\alpha>0$.}
	We start the proof by quoting the first part of the proof of Lemma \ref{lemma:the.original},
	as presented in \cite[pp.~140--141]{Beck2017:First.Order.Methods},
	since the assumed property of the linear transform is not crucial at first.
	
	By definition, $\prox_f\colon\mathbb{V}\to\mathbb{V}$ for $f(\u) = g(\ana\u+\b)$ is a mapping such that $\prox_f(\x)$ is the optimal solution to
	\begin{equation}
	\min _{\u \in \mathbb{V}}\left\{g(\ana\u+\b)+\frac{1}{2}\norm{\u-\x}^{2}\right\}.
	\end{equation}
	The above problem can be formulated as the following constrained problem:
	\begin{equation}
	\begin{array}{ll}{\min _{\u \in \mathbb{V}, \z \in \RR^{m}}} & {\left\{g(\z)+\frac{1}{2}\norm{\u-\x}^{2}\right\}} \\[0.5em] {\text { s.\,t. }} & {\z=\ana\u+\b.}\end{array}
	\label{eq:const.prob}
	\end{equation}
	Denote the optimal solution of \eqref{eq:const.prob} by $(\tilde{\u},\tilde{\z})$ (the existence and uniqueness of $\tilde{\u}$ and $\tilde{\z}$ follow from the underlying assumption that $g$ is proper close and convex).
	Note that $\tilde{\u}=\prox_f(\x)$.
	Fixing $\z=\tilde{\z}$, we obtain that $\tilde{\u}$ is the optimal solution to
	\begin{equation}
	\begin{array}{ll}{\min _{\u \in \mathbb{V}}} & {\frac{1}{2}\norm{\u-\x}^{2}} \\[0.5em] {\text { s.\,t.}} & {\ana\u=\tilde{\z}-\b.}\end{array}
	\label{eq:const.prob.strongly.dual}
	\end{equation}
	Since strong duality holds for the problem \eqref{eq:const.prob.strongly.dual}
	\cite[pp.\,439--440]{Beck2017:First.Order.Methods}, it follows that there exists $\y\in\RR^m$ for which the two conditions
	\begin{align}
	\tilde{\u} & \in \argmin_{\u \in \mathbb{V}}\left\{\frac{1}{2}\norm{\u-\x}^{2}+\langle\y, \ana\u-\tilde{\z}+\b\rangle\right\},
	\label{eq:aligned.first.row}\\
	\ana\tilde{\u} &=\tilde{\z}-\b \label{eq:aligned.second.row}
	\end{align}
	are satisfied.
	Since the objective function in \eqref{eq:aligned.first.row} is strictly convex and differentiable,
	the unique minimizer $\tilde{\u}$ is obtained by setting its gradient to zero, which leads to
	\begin{equation}
	\tilde{\u} = \x - \syn\y.
	\label{eq:tilde.u}
	\end{equation}
	Substituting this expression of $\tilde{\u}$ into \eqref{eq:aligned.second.row}, we obtain
	\begin{equation}
	\ana(\x-\syn\y) = \tilde{\z} - \b.
	\label{eq:important.relation}
	\end{equation}
	Quoting \cite{Beck2017:First.Order.Methods} must be stopped here, since we will further utilize a~different assumption on the linear operator.
	
	Since $\syn\!\ana=\alpha\Id$, applying the synthesis $\syn$ onto both sides of Eq.\,\eqref{eq:important.relation} leads to
	\begin{equation}
	\x-\syn\y = \alpha^{-1}\syn(\tilde{\z}-\b).
	\label{eq:equivalant.stuff}
	\end{equation}
	The important observation here is that \eqref{eq:equivalant.stuff} is equivalent to \eqref{eq:important.relation} only when $\tilde{\z}-\b\in\mathcal{R}(\ana)$, which is enforced by the left hand side of the relation \eqref{eq:important.relation}.
	Substituting this result into \eqref{eq:tilde.u} leads to an explicit expression for $\tilde{\u}$ in terms of $\tilde{\z}$ (still limited to $\tilde{\z}-\b\in\mathcal{R}(\ana)$):
	\begin{equation}
	\tilde{\u} = \alpha^{-1}\syn(\tilde{\z}-\b).
	\label{eq:tilde.u.2}
	\end{equation}
	Plugging this result in the minimization problem \eqref{eq:const.prob}, we obtain that $\tilde{\z}$ is given by
	\begin{equation}
	\tilde{\z} = \argmin_{\z}\left\{g(\z) + \irab(\z) + \frac{1}{2}\left\|\alpha^{-1}\syn(\z-\b)-\x\right\|^2 \right\},
	\label{eq:not.great.not.terrible}
	\end{equation}
	where the relation $\tilde{\z}-\b\in\mathcal{R}(\ana)$ is enforced by the indicator function $\irab$.
	
	Now recall two useful properties:
	\begin{enumerate}
		\item Viewing $\ana$ from the perspective of frame theory,
		it is an analysis operator corresponding to a tight frame, for which it holds
		$\norm{\ana\z}^2 = \alpha\norm{\z}^2$ for all $\z\in\mathbb{V}$.
		\item Since we require $\tilde{\z}-\b\in\mathcal{R}(\ana)$, it follows from Eq.\,\eqref{eq:proj.RA} that $\alpha^{-1}\ana\syn(\tilde{\z}-\b) = \tilde{\z}-\b$.
	\end{enumerate}
	
	Using the first property, we can rewrite \eqref{eq:not.great.not.terrible} as
	\begin{equation}
	\tilde{\z} = \argmin_{\z} \left\{ g(\z) + \irab(\z) + \frac{1}{2\alpha}\left\|\ana\alpha^{-1}\syn(\z-\b)-\ana\x\right\|^2 \right\}.
	\end{equation}
	Using the second property and multiplying the objective function by the (positive) constant $\alpha$ leads to
	\begin{align}
	\tilde{\z} &= \argmin_{\z}\left\{ \alpha g(\z) + \irab(\z) + \frac{1}{2}\left\|\z-(\ana\x+\b)\right\|^2 \right\}\\
	&=\prox_{\alpha g+\irab}(\ana\x+\b).\label{eq:z.is.prox}
	\end{align}
	Plugging the expression for $\tilde{\z}$ into \eqref{eq:tilde.u.2} produces the desired result.
\end{proof}

\begin{proof}[Proof of Lemma~\ref{lemma:approx.is.prox}]
	To prove that a function $F\colon\mathbb{V}\to\mathbb{V}$ is a proximal operator of a convex lower semi-continuous function, it is sufficient to show two properties \cite[Corollary 10.c]{Moreau1965:Proximite.dualite}:
	\begin{enumerate}
		\item there exists a convex lower semi-continuous function $\psi$ such that for any $\y\in\mathbb{V}$, $F(\y)\in\partial\psi(\y)$,
		\item $F$ is non-expansive, i.e.
		\begin{equation}
		\left\|F(\y)-F\left(\y^{\prime}\right)\right\| \leq\left\|\y-\y^{\prime}\right\|, \quad \forall \y, \y^{\prime} \in \mathbb{V}.
		\label{eq:non.expansive}
		\end{equation}
	\end{enumerate}
	The symbol $\partial\psi(\y)$ denotes the subdifferential of the function $\psi$ at the point $\y$,
	see e.g.\ \cite[Def.\,1.2.1, p.\,167]{Hiriart-UrrutyLemarechal2001:Fundament.convex.analysis}.

	Observe that the first property is necessarily satisfied for $F = \prox_{\alpha g}$, since it is a proximal operator.
	This means that there exists a convex lower semi-continuous function $\eta$ such that
	\begin{equation}
	\forall\x\in\RR^m\quad\prox_{\alpha g}(\x)\in\partial\eta(\x) = \{ \vect{s} \mid \eta(\y)\geq \eta(\x) + \langle \vect{s}, \y-\x \rangle\;\forall\y\in\RR^m \}.
	\end{equation}
	If this relation holds for all $\x$, it must also hold for vectors in the form $\x=\ana\z$, leading to
	\begin{equation}
	\forall\z\in\mathbb{V}\quad\prox_{\alpha g}(\ana\z)\in\partial\eta(\ana\z) = \{ \vect{s} \mid \eta(\y)\geq \eta(\ana\z) + \langle \vect{s}, \y-\ana\z \rangle\;\forall\y\in\RR^m \}.
	\end{equation}
	Then also	
	\begin{equation}
	\forall\z\quad\alpha^{-1}\syn\prox_{\alpha g}(\ana\z)\in\alpha^{-1}\syn\partial\eta(\ana\z).
	\label{eq:subdifferential}
	\end{equation}
	Recalling the property \cite[Theorem 4.2.1, p.\,184]{Hiriart-UrrutyLemarechal2001:Fundament.convex.analysis}
	\begin{equation}
		\partial(h\circ A)(\x) = \transp{A}\partial h(A\x)\quad\forall\x\in\RR^n,
	\end{equation}
	\edit{Eq.\,\eqref{eq:subdifferential} further implies}
	\begin{equation}
	\forall\z\quad\alpha^{-1}\syn\prox_{\alpha g}(\ana\z)\in\alpha^{-1}\partial(\eta\circ\ana)(\z) = \partial(\alpha^{-1}\eta\circ\ana)(\z).
	\end{equation}
	Since $\eta$ is a lower semi-continuous and convex function and $\ana$ is a linear operator, $\eta\circ\ana$ and $\alpha^{-1}\eta\circ\ana$ are also lower semi-continuous and convex.
	Property 1. is thus satisfied for $F = \badprox_{f}$ with $\psi = \alpha^{-1}\eta\circ\ana$.
	
	The non-expansivity can be shown similarly:
		Substituting \eqref{eq:shortened.bad.prox} into \eqref{eq:non.expansive} and using the fundamental property of operator norm leads to
	\begin{equation}
	\norm{\alpha^{-1}\syn\left(\prox_{\alpha g}(\ana\y)-\prox_{\alpha g}(\ana\y^\prime)\right)} \leq \norm{\alpha^{-1}\syn}\norm{\prox_{\alpha g}(\ana\y)-\prox_{\alpha g}(\ana\y^\prime)}.
	\label{eq:proof.first}
	\end{equation}
	
	First, let us compute the operator norm of $\norm{\alpha^{-1}\syn}$.
	The property $\norm{\ana\x}^2 = \alpha\norm{\x}^2\;\forall\x$ implies $\norm{\ana\x} = \sqrt{\alpha}\norm{\x}$, i.e. $\norm{\ana} = \sqrt{\alpha}$.
	Thus
	\begin{equation}
	\norm{\alpha^{-1}\syn} = \alpha^{-1}\norm{\syn} = \alpha^{-1}\norm{\ana} = \frac{1}{\sqrt{\alpha}}.
	\label{eq:proof.second}
	\end{equation}
	
	Now, since $\prox_{\alpha g}$ meets \eqref{eq:non.expansive}, it holds
	\begin{equation}
	\norm{\prox_{\alpha g}(\ana\y)-\prox_{\alpha g}(\ana\y^\prime)} \leq \norm{\ana\y-\ana\y^\prime} = \norm{\ana(\y-\y^\prime)} = \sqrt{\alpha}\norm{\y-\y^\prime}.
	\label{eq:proof.third}
	\end{equation}
	Plugging \eqref{eq:proof.second} and \eqref{eq:proof.third} into \eqref{eq:proof.first} shows that \eqref{eq:non.expansive} truly holds for the $\badprox_{f}$ operator.
\end{proof}

\edit{
\section{Illustrative example}
	\label{appendix:example}
	The example illustrates the difference of $f$ and $\phi$, which are the functions that $\prox_{f}$ and $\prox_\phi=\badprox_{f}$ belong to, respectively.
	The example is closely related to the experiment in Sec.\,\ref{sec:experiments}.
	The function $g$ is the $\ell_1$ norm on $\RR^m=\RR^4$, $g=\norm{\cdot}_1\colon\RR^4\to\RR$, and $f=g\circ\ana\colon\mathbb{V}=\RR^n=\RR^2\to\RR$.
	The dimensions are set such that the functions $f$ and $\phi$ can be visualized.
	The operator $\ana\colon\RR^2\to\RR^4$ is identified with the matrix
	\begin{equation}
	\matr{A} = \begin{bmatrix*}[r]
	0.7464 & 0.0444 \\
	0.1588 & 0.9127 \\
	-0.9348 & 0.7795 \\
	-0.7375 & -0.7466
	\end{bmatrix*}
	\end{equation}
	and represents the analysis of a tight frame with $\alpha = 2$.
	The synthesis operator can be represented by the transposed matrix
	\begin{equation}
	\transp{\matr{A}} = \begin{bmatrix*}[r]
	0.7464 & 0.1588 & -0.9348 & -0.7375 \\
	0.0444 &   0.9127 & 0.7795 & -0.7466
	\end{bmatrix*}.
	\end{equation}
	Fig.\,\ref{fig:example:generators} depicts the columns of $\transp{\matr{A}}$, i.e.\ the tight frame as a set of vectors (generators) in the real plane.
}

\edit{	
	In the finite-dimensional and convex setting, the function $\phi$ can be numerically computed using Eq.\,\eqref{eq:phi} as
	\begin{equation}
		\phi(\x) = \alpha^{-1}\min_{\y\in\RR^4}\left\{ \alpha\norm{\y}_1 + \frac{1}{2}\norm{\ana\x-\y}^2 + \iota_{\mathcal{N}(\syn)}(\ana\x-\y) \right\}.
	\end{equation}
}%
\edit{%
	For the purpose of this illustration, we sample the functions $f$ and $\phi$ on a discrete grid in $\RR^2$.
	To be more specific, the values for $\x\in\RR^2$ are taken equidistantly from the range $[-2,2]\times[-2,2]$ with a step of $0.2$ in each direction.
}%
\edit{%
	The result is plotted in Fig.\,\ref{fig:example:functions}.
	As expected based on Lemma~\ref{lemma:leq}, the function $\phi$ is less or equal to $f$ at each point.
	Furthermore, the functions $f$ and $\phi$ closely resemble each other, which explains the results in the audio inpainting experiment.}
\edit{
	\begin{figure}[t]
		\centering
		\begin{subfloat}[\edit{the tight frame}]{
				\includegraphics[width=0.4\textwidth]{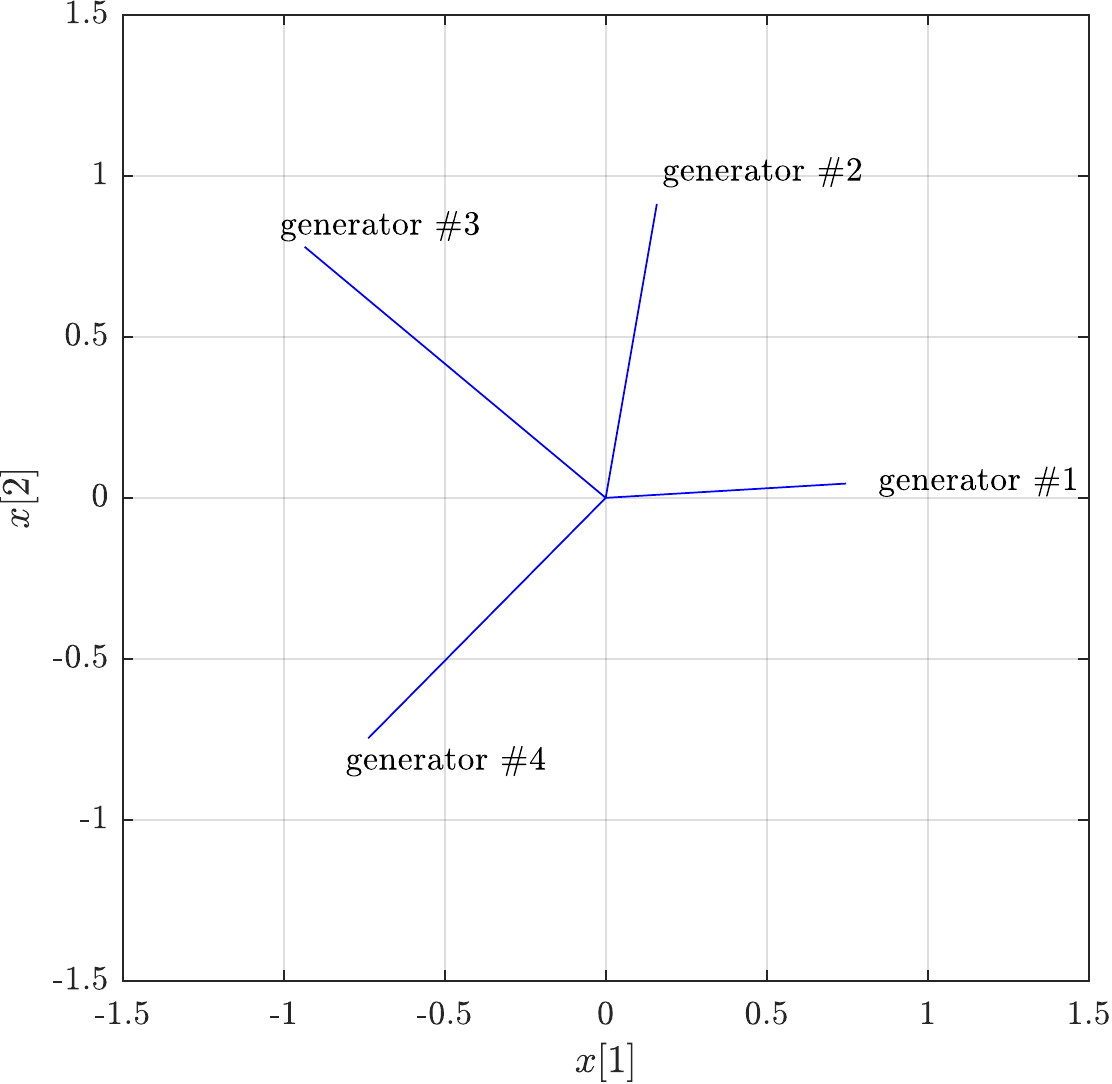}
				\label{fig:example:generators}}
		\end{subfloat}
		\hfill
		\begin{subfloat}[\edit{comparison of the functions $f$ and $\phi$}]{
				\includegraphics[width=0.5\textwidth]{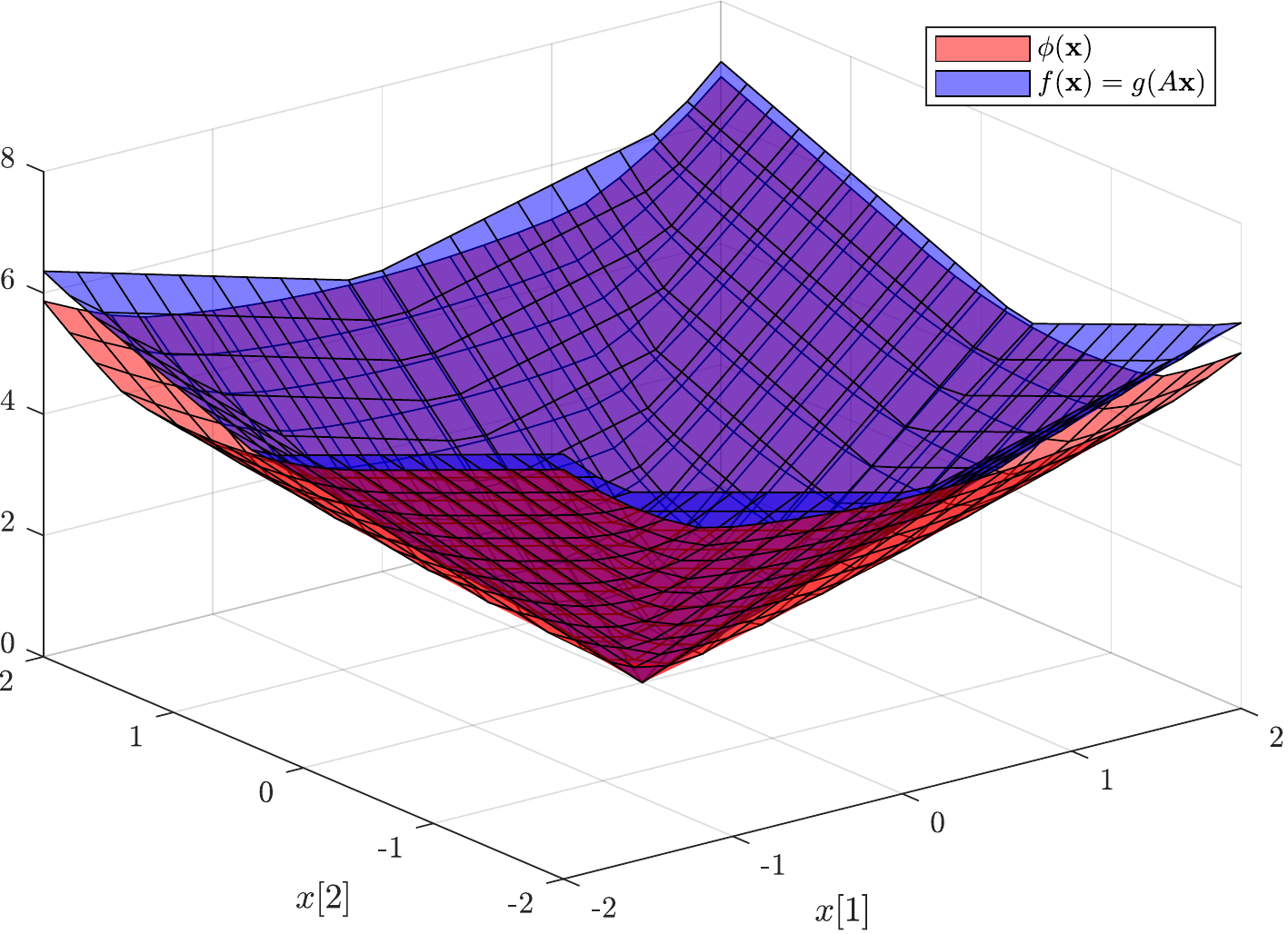}
				\label{fig:example:functions}}
		\end{subfloat}
		\caption{\edit{Illustrative example.
			Proximal operator of the function $f=g\circ\ana$ is defined by Lemma \ref{lemma:the.new}.
			When this operator is only approximated by $\badprox_f$, the underlying functional of the proximal operator is $\phi$, defined in Lemma~\ref{lemma:phi}.
			The example is plotted for $g=\norm{\cdot}_1$ and $\syn$ consisting as a matrix of the generators in Fig.\,\ref{fig:example:generators}.}
		}
		\label{fig:example}
	\end{figure}
}

\section{Comments on the use of complex-valued operators}
\label{appendix:complex}

	In the experimental part, the complex-valued analysis operator $\ana\colon\CC^n\to\CC^m$ and its synthesis counterpart $\adjoint{\ana}$ (instead of $\syn$) are used.
	Nonetheless, both Lemma~\ref{lemma:the.original} and Lemma~\ref{lemma:the.new} remain valid also in such a case.
	The place in the proof that could make trouble in the complex case is the formulation of the Lagrangian in Eq.\,\eqref{eq:aligned.first.row}, since it should be real;
	the rest of the manipulations also hold in the complex case.
	
	To deal with the Lagrangian for complex variables, observe that
	\begin{equation}
		\ana\u = \tilde{\z}-\b \ \Leftrightarrow\  \Re(\ana\u-\tilde{\z}+\b)=\mathbf{0}\wedge \Im(\ana\u-\tilde{\z}+\b)=\mathbf{0},
	\end{equation}
	which can be rewritten in matrix form as
	\begin{equation}
	\underbrace{\left[\begin{array}{cc}
		{\Re(\ana)} & {-\Im(\ana)} \\
		{\Im(\ana)} & {\Re(\ana)}
		\end{array}\right]}_{\hat{\ana}} \underbrace{\left[\begin{array}{c}
		{\Re(\u)} \\
		{\Im(\u)}
		\end{array}\right]}_{\hat{\u}}-\underbrace{\left[\begin{array}{c}
		{\Re(\tilde{\z}-\b)} \\
		{\Im(\tilde{\z}-\b)}
		\end{array}\right]}_{\hat{\c}}=\mathbf{0},
	\end{equation}
	where $\hat{\ana}\colon\RR^{2n}\to\RR^{2m}$, $\hat{\u}\in\RR^{2n}$ and $\hat{\c}\in\RR^{2m}$.
	Rewriting $\x\in\CC^n$ in a similar way into a real vector $\hat{\x}\in\RR^{2n}$ leads to the Lagrangian
	\begin{equation}
		\frac{1}{2}\norm{\hat{\u}-\hat{\x}}^{2}+\langle\hat{\y}, \hat{\ana}\hat{\u}-\hat{\c}\rangle,
		\label{eq:lagrangian.real}
	\end{equation}
	where $\hat{\y}\in\RR^{2m}$.
	The unique minimizer of \eqref{eq:lagrangian.real} is
	\begin{equation}
		\hat{\u} = \hat{\x} - \hat{\ana}^\top\hat{\y}.
	\end{equation}
	It can be easily shown that the real and imaginary parts of the complex solution $\tilde{\u} = \x - \adjoint{\ana}\y$
	obtained using Eq.\,\eqref{eq:tilde.u} match the results in just presented real case.
	
	Note that in the case of the complex Gabor transform presented earlier,
	the mapping $\prox_f$ should map the real signal $\x$ to a real signal $\tilde{\u}$.
	This would be ensured by a particular convex-conjugate structure of $\ana$, $\adjoint{\ana}$ and also $\y$ in such a case.
	For a~more detailed description, see the discussion in \cite[p.\,9]{RajmicZaviskaVeselyMokry2019:Axioms} and the references therein.
	
	A~similar argumentation would be used regarding the proof of Lemma \ref{lemma:approx.is.prox} and the concept of subdifferentials,
	where, once again, we would define the real-valued inner product for complex vectors.

	\newcommand{\noopsort}[1]{} \newcommand{\printfirst}[2]{#1}
	\newcommand{\singleletter}[1]{#1} \newcommand{\switchargs}[2]{#2#1}

\end{document}